\documentclass[preprint,amsmath,amssymb,aps,pre,showkeys]{revtex4-2}

\usepackage{graphicx}
\usepackage{dcolumn}
\usepackage{bm}
\usepackage[hypertexnames=false]{hyperref}
\usepackage{amsthm}
\usepackage{mathtools}

\newtheorem{theorem}{Theorem}
\newtheorem{proposition}{Proposition}
\newtheorem{corollary}{Corollary}

\begin{document}

\title{Fisher Information Dynamics: A Kinematic Framework for Phase Space Ordering with Applications to Shock Layers and Turbulence}

\author{Yingchuan Wu}
\email{wuyc@cardc.cn}
\affiliation{Institute of Aerospace Technology, China Aerodynamics Research and Development Center, Mianyang 621000, China}

\date{\today}

\begin{abstract}
Here ``order'' denotes the sharpness of probability-density gradients, as quantified by Fisher information, distinct from thermodynamic order parameters. A kinematic framework for phase space ordering is established based on the Fisher information production rate, which possesses a rigorous thermodynamic foundation through the de Bruijn identity, Stam inequality, and dissipation theorem. A four-term decomposition into isotropic contraction, traceless shear, divergence gradient, and boundary flux is derived from the continuity equation. Extending to the Fokker--Planck equation, the non-positivity of the diffusive dissipation term is proved, with strict negativity for non-uniform densities. A vorticity-independence theorem is proved: purely rotational velocity fields do not alter the order measure. Explicit geometric sign criteria for each deformation term are derived, enabling prediction of order generation or destruction. Through exact solutions of the Burgers shock layer, a pointwise identity between the compression and divergence-gradient contributions is uncovered, underlying the three-term balance. In two-dimensional incompressible turbulence, a statistical two-term equilibrium---vanishing time-averaged production in the steady state---is established and shown to serve as the null hypothesis for gradient dynamics. This 2D equilibrium is the conservative reference state from which three-dimensional turbulence departs through vortex stretching, as revealed by recent direct numerical simulation analysis. It is further shown that variational steady states of the associated Wasserstein gradient flow imply vanishing of the total production rate. This framework provides a diagnostic and predictive tool for analyzing ordering dynamics in non-equilibrium systems.
\end{abstract}

\keywords{Fisher information; dissipation structure; variational principle; shock layer; vorticity independence; turbulence}

\maketitle

\section{Introduction}

Self-organization in non-equilibrium systems spans fluids, plasmas, biology, and artificial systems, with the common feature that macroscopic ordered structures emerge from microscopic chaos. Prigogine's dissipative structure theory indicates that far-from-equilibrium systems maintain spatiotemporal order through energy dissipation~\cite{prigogine1967}, while shock layers and vortical structures in turbulence represent typical examples of ordering enabled by dissipation balance. However, a fundamental theoretical question remains: given a probability distribution in phase space, how does its ``degree of order'' evolve with dynamics?

Fisher information $C=\int\rho|\nabla\ln\rho|^2dx$ provides a natural tool for measuring the degree of order in a distribution~\cite{frieden2004,plastino1995}, supported by three fundamental thermodynamic relations (Sec.~\ref{sec:thermodynamic}): the de Bruijn identity connects it to entropy production, the Stam inequality bounds the effective volume, and the dissipation theorem ensures its monotonic decrease under diffusion. Unlike Shannon entropy, which measures ``uncertainty''~\cite{cover2006}, Fisher information measures the ``sharpness of gradient structures,'' reaching maxima in strong-gradient regions such as shock layers and turbulent dissipation zones. The complementary relationship between Fisher information and Shannon entropy is quantified by the Stam inequality~\cite{stam1959}. The evolution of Fisher information under Fokker--Planck dynamics has been studied by Yamano~\cite{yamano2002} and Plastino et al.~\cite{plastino1997,plastino1995b}, and gradient flows in probability spaces have been systematically developed by Otto~\cite{otto2001}, Jordan--Kinderlehrer--Otto~\cite{jordan1998}, and Ambrosio--Gigli--Savar\'e~\cite{ambrosio2008}. Information thermodynamics provides a broader context for understanding the interplay between information and physical processes~\cite{parrondo2015}. However, the decomposition of Fisher information production rate into physically meaningful deformation mechanisms (contraction, shear, divergence gradient), and its connection to variational principles, remains insufficiently explored. This paper provides a diagnostic framework connecting information geometry~\cite{amari2016} to fluid mechanics, complementing existing work on abstract gradient flow theory.

The core contributions include:
\begin{enumerate}
\item \textbf{Kinematic decomposition}: Derivation of a four-term decomposition of the Fisher information production rate from the continuity equation, clarifying contributions from isotropic contraction, shear, divergence gradient, and boundary flux;
\item \textbf{Vorticity-independence theorem}: A proof that purely rotational velocity fields do not change the order measure, giving the decomposition a natural gauge invariance (Fig.~\ref{fig:vorticity});
\item \textbf{Sign criteria}: Explicit geometric conditions under which each deformation term is positive or negative, enabling prediction of order generation;
\item \textbf{Dissipation theorem}: Proof of the non-positivity of the diffusive dissipation term in the Fokker--Planck framework, with strict negativity for non-uniform densities and careful attention to the domain of validity;
\item \textbf{Physical verification}: A three-level hierarchy of verification cases---the Burgers shock layer (three-term balance with a pointwise identity) and two-dimensional turbulence (statistical two-term equilibrium)---where the 2D equilibrium is established as the null hypothesis for gradient dynamics, the conservative baseline from which three-dimensional turbulence departs through vortex stretching~\cite{wu2025_3d};
\item \textbf{Variational link}: A demonstration that variational steady states of the associated Wasserstein gradient flow imply vanishing of the total production rate, without implying a balance among individual terms.
\end{enumerate}

\section{Kinematics: Exact Decomposition of Fisher Information}

\subsection{Order Measure and Phase Space Dynamics}

Consider a probability density $\rho(x,t)$ on an $n$-dimensional phase space $\Omega\subset\mathbb{R}^n$ satisfying the normalization condition. Define the Fisher information as the order measure:
\begin{equation}
C(t) \equiv \int_\Omega \rho(x,t)|\nabla\ln\rho(x,t)|^2dx \tag{1}
\end{equation}

When $\rho$ is uniform, $C=0$; in strongly localized structures (such as shock layers), $C$ reaches maxima. \textbf{Terminological clarification}: The ``order'' here specifically refers to the \textbf{degree of gradient localization} of the probability density, not thermodynamic order or long-range order in the statistical physics sense. Fisher information measures the ``structural sharpness'' or ``gradient concentration'' of the distribution. \textbf{Applicability boundary}: Fisher information should be applied with caution to distributions with rapid oscillations. For example, the distribution $\rho(x) \propto 1+0.9\sin(100x)$ on $[0,2\pi]$ has large $C \sim 10^4$ due to high-frequency components, yet may not correspond to physically ordered structures. In practice, Fisher information is most meaningful for gradient-dominated structures (shock layers, turbulent dissipation zones) where high-frequency oscillations are suppressed by physical regularization mechanisms.

In the deterministic stage, $\rho$ satisfies the continuity equation:
\begin{equation}
\frac{\partial\rho}{\partial t}+\nabla\cdot(\rho\mathbf{v})=0 \tag{2}
\end{equation}
where $\mathbf{v}(x,t)$ is the velocity field.

\subsection{Four-Term Decomposition Theorem}

Taking the time derivative of Eq.~(1), substituting Eq.~(2), and performing integration by parts (assuming boundary terms vanish or are retained explicitly), we obtain:
\begin{equation}
\frac{dC}{dt}=\mathcal{T}_1+\mathcal{T}_2+\mathcal{T}_3+\mathcal{T}_4 \tag{3}
\end{equation}
where the terms are defined as follows:

\textbf{Isotropic contraction term}:
\begin{equation}
\mathcal{T}_1=-\frac{2}{n}\int_\Omega\rho(\nabla\cdot\mathbf{v})|\nabla\ln\rho|^2dx \tag{4}
\end{equation}

\textbf{Traceless shear term}:
\begin{equation}
\mathcal{T}_2=-2\int_\Omega\rho\,\mathbf{S}:(\nabla\ln\rho\otimes\nabla\ln\rho)dx \tag{5}
\end{equation}
where $\mathbf{S}=\frac{1}{2}(\nabla\mathbf{v}+\nabla\mathbf{v}^T)-\frac{1}{n}(\nabla\cdot\mathbf{v})\mathbf{I}$ is the traceless shear tensor.

\textbf{Divergence inhomogeneity term}:
\begin{equation}
\mathcal{T}_3=-2\int_\Omega\rho\,\nabla\ln\rho\cdot\nabla(\nabla\cdot\mathbf{v})dx \tag{6}
\end{equation}

\textbf{Boundary flux term}:
\begin{equation}
\mathcal{T}_4=-\int_{\partial\Omega}\rho|\nabla\ln\rho|^2\mathbf{v}\cdot\mathbf{n}\,dS \tag{7}
\end{equation}

This identity holds rigorously for any smooth velocity field, providing an exact accounting method for order changes. The complete derivation is given in Appendix~\ref{app:decomposition}.

\subsection{Choice of Representation for the Fokker--Planck Equation}

The Fokker--Planck equation can be written in two equivalent forms:
\begin{itemize}
\item \textbf{Non-conservative form}: $\partial_t\rho + \nabla\cdot(\rho\mathbf{v}_{\text{drift}}) = \nabla\cdot(D\nabla\rho)$;
\item \textbf{Conservative form}: $\partial_t\rho + \nabla\cdot(\rho\mathbf{v}_W) = 0$, where $\mathbf{v}_W = \mathbf{v}_{\text{drift}} - D\nabla\ln\rho$.
\end{itemize}

We adopt the non-conservative form, treating diffusion separately as $\mathcal{T}_{\text{diff}}$, because this preserves the non-positivity of the dissipation theorem and aligns with the variational structure of Wasserstein gradient flows.

\subsection{Vorticity-Independence Theorem and Sign Criteria}

The four-term decomposition is not merely a bookkeeping identity; it encodes fundamental geometric constraints on how velocity fields can modify order. First, we show that purely rotational motion is information-preserving.

\begin{theorem}[Vorticity-Independence]\label{thm:vorticity}
Let $\mathbf{A}=\frac{1}{2}(\nabla\mathbf{v}-\nabla\mathbf{v}^T)$ be the antisymmetric part of the velocity gradient tensor. Then the terms $\mathcal{T}_1,\mathcal{T}_2,\mathcal{T}_3$ do not depend on $\mathbf{A}$. Consequently, if $\nabla\mathbf{v}=\mathbf{A}$ (i.e., the flow is a pure rigid-body rotation with $\nabla\cdot\mathbf{v}=0$ and $\mathbf{S}=0$), then $\mathcal{T}_1=\mathcal{T}_2=\mathcal{T}_3=0$. In the absence of boundary flux, $dC/dt=0$.
\end{theorem}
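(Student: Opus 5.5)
The plan is to decompose $\nabla\mathbf{v}=\mathbf{E}+\mathbf{A}$, where $\mathbf{E}=\frac{1}{2}(\nabla\mathbf{v}+\nabla\mathbf{v}^T)$ is the symmetric strain-rate tensor. Its traceless part is $\mathbf{S}=\mathbf{E}-\frac{1}{n}(\nabla\cdot\mathbf{v})\mathbf{I}$. We then check term by term that each integrand of $\mathcal{T}_1,\mathcal{T}_2,\mathcal{T}_3$ is built only from $\mathbf{E}$ and its trace.

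For $\mathcal{T}_1$ and $\mathcal{T}_3$, note that $\nabla\cdot\mathbf{v}=\operatorname{tr}(\nabla\mathbf{v})=\operatorname{tr}\mathbf{E}+\operatorname{tr}\mathbf{A}$. Since $\mathbf{A}^T=-\mathbf{A}$, its diagonal vanishes, so $\operatorname{tr}\mathbf{A}=0$ and $\nabla\cdot\mathbf{v}=\operatorname{tr}\mathbf{E}$. Hence the integrands of Eqs.~(4) and (6), which involve $\mathbf{v}$ only through $\nabla\cdot\mathbf{v}$ and its gradient, do not depend on $\mathbf{A}$. For $\mathcal{T}_2$, the definition of $\mathbf{S}$ already removes $\mathbf{A}$. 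One could also argue that the full contraction $\nabla\mathbf{v}:(\mathbf{g}\otimes\mathbf{g})$ with $\mathbf{g}=\nabla\ln\rho$ satisfies $\mathbf{A}:(\mathbf{g}\otimes\mathbf{g})=A_{ij}g_ig_j=0$, because an antisymmetric tensor contracted with a symmetric one vanishes. This is the real content of the statement: it explains why only the symmetric part appears in the decomposition of Appendix~\ref{app:decomposition}. I would include this remark, since the term $\int\rho\,\nabla\mathbf{v}:(\mathbf{g}\otimes\mathbf{g})$ arising from differentiating $|\nabla\ln\rho|^2$ is exactly where $\mathbf{A}$ could otherwise enter.

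For the consequence, suppose $\nabla\mathbf{v}=\mathbf{A}$. Then $\mathbf{E}=0$, so $\nabla\cdot\mathbf{v}=0$ everywhere, and $\nabla(\nabla\cdot\mathbf{v})=0$, and $\mathbf{S}=0$. Each integrand vanishes pointwise, giving $\mathcal{T}_1=\mathcal{T}_2=\mathcal{T}_3=0$. The identity (3) then reduces to $dC/dt=\mathcal{T}_4$, and under the no-flux assumption ($\mathbf{v}\cdot\mathbf{n}=0$ on $\partial\Omega$, or $\Omega=\mathbb{R}^n$ with sufficient decay) we get $dC/dt=0$.

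The only step needing care is the claim that $\mathbf{A}$ cannot hide inside $\mathcal{T}_3$ through higher derivatives. This is settled by the exact identity $\nabla\cdot\mathbf{v}=\operatorname{tr}\mathbf{E}$ holding pointwise, so its gradient is likewise $\mathbf{A}$-free. A minor side remark: $\nabla\mathbf{v}=\mathbf{A}$ everywhere forces $\mathbf{v}$ to be a rigid motion $\mathbf{v}=\mathbf{b}+\mathbf{W}x$ with constant antisymmetric $\mathbf{W}$, since $\partial_k A_{ij}=0$ follows from symmetry arguments. This is not needed for the proof, but it justifies the phrase ``rigid-body rotation.''
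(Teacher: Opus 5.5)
Your proof is correct and follows the paper's argument. The paper likewise uses $\mathbf{A}:(\mathbf{g}\otimes\mathbf{g})=0$ for the shear contribution, notes that $\mathcal{T}_1$ and $\mathcal{T}_3$ see $\mathbf{v}$ only through $\nabla\cdot\mathbf{v}$ and its gradient, and then sets $\nabla\cdot\mathbf{v}=0$, $\mathbf{S}=0$ to obtain the consequence; your explicit $\operatorname{tr}\mathbf{A}=0$ step and your remark that the contraction identity does its real work in $\mathcal{B}_2$ of Appendix~\ref{app:decomposition} (since $\mathbf{S}$ already excludes $\mathbf{A}$ by definition) make your version slightly sharper than the paper's.
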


\begin{proof}
For any antisymmetric matrix $\mathbf{A}$, and any symmetric dyadic $\mathbf{g}\otimes\mathbf{g}$ with $\mathbf{g}=\nabla\ln\rho$, we have
\begin{equation}
\mathbf{A}:(\mathbf{g}\otimes\mathbf{g}) = \sum_{i,j} A_{ij}g_i g_j = \frac{1}{2}\sum_{i,j}(v_{i,j}-v_{j,i})g_i g_j = \frac{1}{2}\left(\sum_{i,j}v_{i,j}g_i g_j - \sum_{i,j}v_{j,i}g_i g_j\right) = 0,
\end{equation}
by swapping indices in the second sum. Thus the antisymmetric part never contributes to $\mathcal{T}_2$. Since $\mathcal{T}_1$ involves only $\nabla\cdot\mathbf{v}$ and $\mathcal{T}_3$ involves gradients of $\nabla\cdot\mathbf{v}$, they also do not depend on $\mathbf{A}$. If $\nabla\mathbf{v}=\mathbf{A}$, then $\nabla\cdot\mathbf{v}=0$, $\mathbf{S}=0$, and $\nabla(\nabla\cdot\mathbf{v})=0$, so all three interior terms vanish. If the boundary term also vanishes, then $dC/dt=0$.
\end{proof}

This theorem formalizes the intuition that pure rotation translates the density without stretching or compressing it, leaving the gradient structure unchanged. Figure~\ref{fig:vorticity} provides a visual demonstration: a localized density distribution rotated by a rigid-body flow retains its Fisher information exactly, regardless of the rotation angle.

\begin{figure}[htbp]
\centering
\includegraphics[width=0.95\textwidth]{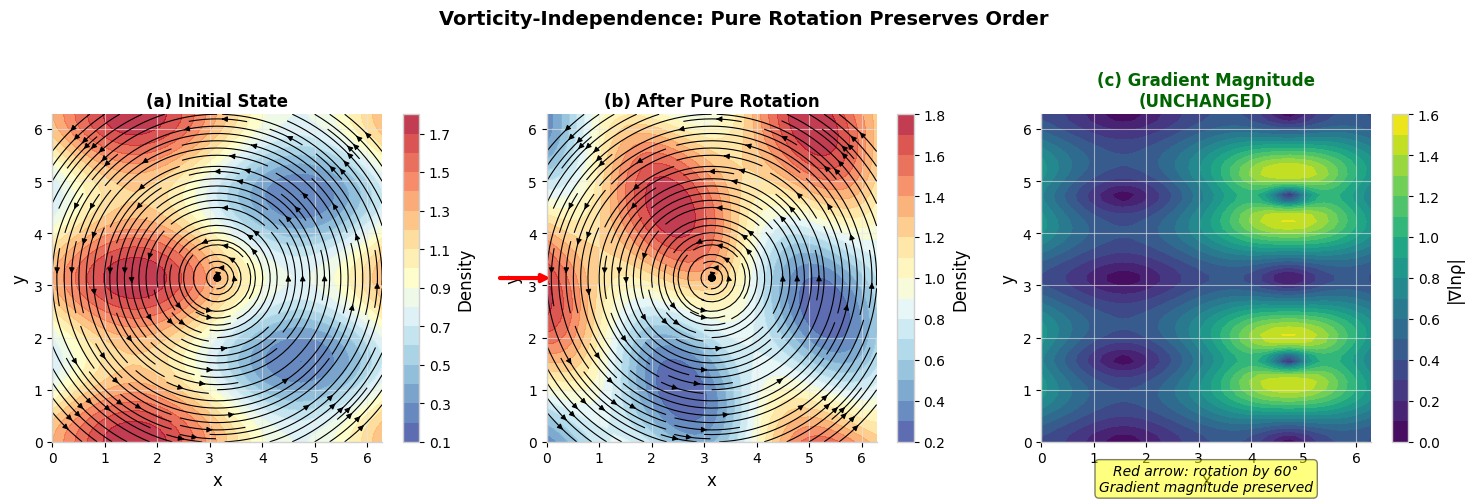}
\caption{Vorticity-independence theorem (Theorem~\ref{thm:vorticity}). (a) Initial scalar density field $\rho(x,y)$ in a pure rotational velocity field (streamlines shown). (b) After rotation by $60^\circ$, the density field is rotated but the gradient magnitude $|\nabla\ln\rho|$ is exactly preserved. (c) The gradient magnitude field remains identical before and after rotation, demonstrating that pure rotation does not alter Fisher information order. This confirms that $\mathcal{T}_1=\mathcal{T}_2=\mathcal{T}_3=0$ for pure rigid-body rotation.}
\label{fig:vorticity}
\end{figure}

Next, we derive explicit geometric criteria for the sign of each term, providing predictive power.

\begin{proposition}[Sign Criteria]\label{prop:sign}
Let $\mathbf{g}=\nabla\ln\rho$, $h=\nabla\cdot\mathbf{v}$, and let $\lambda_1,\dots,\lambda_n$ be the eigenvalues of the symmetric traceless tensor $\mathbf{S}$, with corresponding orthonormal eigenvectors $\mathbf{e}_1,\dots,\mathbf{e}_n$. Then:
\begin{enumerate}
\item $\mathcal{T}_1>0$ if $h<0$ in a region of positive measure where $\rho|\mathbf{g}|^2>0$; $\mathcal{T}_1<0$ if $h>0$.
\item $\mathcal{T}_2>0$ if, on average, $\mathbf{g}$ is more aligned with eigenvectors of $\mathbf{S}$ corresponding to negative eigenvalues than positive ones. More precisely,
\begin{equation}
\mathcal{T}_2 = -2\int_\Omega \rho \sum_{i=1}^n \lambda_i (\mathbf{g}\cdot\mathbf{e}_i)^2 dx.
\end{equation}
Hence, if for all $x$, $\mathbf{g}$ lies in the span of eigenvectors with $\lambda_i<0$, then $\mathcal{T}_2>0$; if in the span of $\lambda_i>0$, then $\mathcal{T}_2<0$.
\item $\mathcal{T}_3>0$ if $\mathbf{g}\cdot\nabla h < 0$ in the region where $\rho>0$; $\mathcal{T}_3<0$ if $\mathbf{g}\cdot\nabla h > 0$.
\end{enumerate}
\end{proposition}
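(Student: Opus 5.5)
The plan is to treat each of the three items separately, working pointwise on the integrands in Eqs.~(4)--(6). Each integrand has a definite sign under the stated hypotheses, so the sign of the integral follows once we add a positivity-of-measure condition that rules out a zero integral.

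For item~1, write $\mathcal{T}_1=-\frac{2}{n}\int_\Omega h\,\rho|\mathbf{g}|^2dx$. The weight $\rho|\mathbf{g}|^2$ is nonnegative. If $h<0$ wherever the weight is positive, then $-h\rho|\mathbf{g}|^2\ge 0$ everywhere. This product is strictly positive on the set of positive measure where $h<0$ and $\rho|\mathbf{g}|^2>0$, so the integral is strictly positive. The case $h>0$ is the mirror argument.

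The statement as written only says ``$h<0$ in a region of positive measure where $\rho|\mathbf{g}|^2>0$''. Read literally, this does not control the sign elsewhere. I will therefore interpret it as: $h\le 0$ on the support of $\rho|\mathbf{g}|^2$, with strict inequality on a set where the weight has positive measure. I will state this explicitly, since it is the only place where the hypothesis needs sharpening.

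For item~2, I will use the spectral decomposition $\mathbf{S}=\sum_i\lambda_i\mathbf{e}_i\otimes\mathbf{e}_i$ at each point. This is available because $\mathbf{S}$ is real symmetric; the $\lambda_i,\mathbf{e}_i$ depend on $x$ and can be chosen measurably. It gives
\begin{equation*}
\mathbf{S}:(\mathbf{g}\otimes\mathbf{g})=\mathbf{g}^T\mathbf{S}\mathbf{g}=\sum_{i=1}^n\lambda_i(\mathbf{g}\cdot\mathbf{e}_i)^2 .
\end{equation*}
Substituting into Eq.~(5) yields the displayed formula for $\mathcal{T}_2$ directly. By Theorem~\ref{thm:vorticity}, only the symmetric part of $\nabla\mathbf{v}$ enters, so no rotation term appears here.

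If $\mathbf{g}$ lies in the span of eigenvectors with $\lambda_i<0$, then $(\mathbf{g}\cdot\mathbf{e}_j)=0$ for every $j$ with $\lambda_j\ge 0$. The integrand $-2\rho\sum_i\lambda_i(\mathbf{g}\cdot\mathbf{e}_i)^2$ is then nonnegative, and strictly positive wherever $\rho|\mathbf{g}|^2>0$. So $\mathcal{T}_2>0$ provided $\rho|\mathbf{g}|^2$ is not a.e.\ zero, i.e.\ provided $C>0$. The positive-eigenvalue case is symmetric.

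Because $\mathbf{S}$ is traceless, both signs of eigenvalue occur unless $\mathbf{S}=0$. The hypothesis is therefore nonvacuous but requires a pointwise alignment condition. The ``on average'' phrasing should be understood as the sign of the weighted sum in the formula for $\mathcal{T}_2$, not as a separate claim.

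For item~3, the integrand of Eq.~(6) is $-2\rho\,\mathbf{g}\cdot\nabla h$. If $\mathbf{g}\cdot\nabla h<0$ on $\{\rho>0\}$, the integrand is strictly positive there. Since $\rho$ is normalized, $\{\rho>0\}$ has positive measure, so $\mathcal{T}_3>0$; the reverse case is analogous.

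The main obstacle is not computational. It is stating the hypotheses precisely enough that strict inequalities hold: nonnegativity of the integrand everywhere, plus strict positivity on a set of positive $\rho$-weighted measure. I will handle this with the same measure-theoretic remark in all three items. I will also note that smoothness of $\rho$ and $\mathbf{v}$, as assumed for the decomposition in Eq.~(3), makes all integrands integrable.
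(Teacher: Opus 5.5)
Your proposal is correct and follows the paper's route: the paper reads off the pointwise sign of the integrands in Eqs.~(4) and (6), and diagonalizes $\mathbf{S}$ to get $\mathbf{S}:(\mathbf{g}\otimes\mathbf{g})=\sum_i\lambda_i(\mathbf{g}\cdot\mathbf{e}_i)^2$. Your added measure-theoretic care (integrand nonnegative everywhere and strictly positive on a set of positive $\rho$-weighted measure, plus $C>0$ in item~2) makes explicit what the paper's one-line proof leaves implicit, and it is genuinely needed: $\mathbf{g}=0$ at any interior maximum of $\rho$, so the strict pointwise hypothesis in item~3 must be read in this almost-everywhere sense.
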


\begin{proof}
Statement 1 follows immediately from Eq.~(4). Statement 2 follows from diagonalizing $\mathbf{S}$ and using orthonormality of eigenvectors. Statement 3 is direct from Eq.~(6).
\end{proof}

\begin{corollary}[Sufficient condition for order generation]\label{cor:generation}
If there is a region $\Omega_0\subset\Omega$ of positive measure such that simultaneously
\begin{equation}
\nabla\cdot\mathbf{v} < 0, \qquad \mathbf{g} \text{ lies in the negative eigenspace of } \mathbf{S}, \qquad \mathbf{g}\cdot\nabla(\nabla\cdot\mathbf{v}) < 0,
\end{equation}
then the sum of interior terms is positive on $\Omega_0$, leading to an increase in global Fisher information (provided boundary flux does not counteract).
\end{corollary}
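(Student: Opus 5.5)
The plan is to work with the integrands of the three interior terms restricted to $\Omega_0$. Write
\[
\mathcal{T}_1+\mathcal{T}_2+\mathcal{T}_3=\int_\Omega \rho\,\Phi\,dx ,
\]
with
\[
\Phi=-\tfrac{2}{n}h|\mathbf{g}|^2-2\sum_i\lambda_i(\mathbf{g}\cdot\mathbf{e}_i)^2-2\,\mathbf{g}\cdot\nabla h ,
\]
where $h=\nabla\cdot\mathbf{v}$ and we use the eigen-representation of $\mathcal{T}_2$ from Proposition~\ref{prop:sign}. The claim ``positive on $\Omega_0$'' will be read as $\int_{\Omega_0}\rho\,\Phi\,dx>0$. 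The increase of $C$ then follows from Eq.~(3) once the contributions from $\Omega\setminus\Omega_0$ and from $\mathcal{T}_4$ are not negative enough to cancel it. This is exactly the proviso in the corollary, which I would make precise as
\[
\int_{\Omega\setminus\Omega_0}\rho\,\Phi\,dx+\mathcal{T}_4\ \ge\ 0 .
\]
The simplest case is $\Omega_0=\Omega$ with no boundary flux.

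The three pieces are handled pointwise on $\Omega_0$, one hypothesis at a time. First, $h<0$ gives $-\tfrac{2}{n}h|\mathbf{g}|^2\ge 0$. Second, if $\mathbf{g}$ lies in the span of eigenvectors with $\lambda_i<0$, then $(\mathbf{g}\cdot\mathbf{e}_i)^2=0$ whenever $\lambda_i\ge0$. Hence $-2\sum_i\lambda_i(\mathbf{g}\cdot\mathbf{e}_i)^2=2\sum_{\lambda_i<0}|\lambda_i|(\mathbf{g}\cdot\mathbf{e}_i)^2\ge0$. Third, $\mathbf{g}\cdot\nabla h<0$ gives $-2\,\mathbf{g}\cdot\nabla h>0$, and this inequality is strict. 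So $\Phi>0$ pointwise on $\Omega_0$.

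The strictness of the third hypothesis is the key observation. It forces $\mathbf{g}\neq0$ everywhere on $\Omega_0$, so no separate nondegeneracy assumption on $\rho|\mathbf{g}|^2$ is needed. The only remaining requirement is $\rho>0$ on a positive-measure subset of $\Omega_0$, and this holds because $\mathbf{g}=\nabla\ln\rho$ is only defined where $\rho>0$. Integrating a strictly positive function against $\rho$ over a set of positive measure then yields $\int_{\Omega_0}\rho\,\Phi\,dx>0$.

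The main obstacle is not the pointwise sign argument but passing from the local statement to the global one. The terms $\mathcal{T}_j$ are integrals over all of $\Omega$, so positivity on $\Omega_0$ alone does not fix the sign of $dC/dt$. I would handle this by stating the result as
\[
\frac{dC}{dt}=\int_{\Omega_0}\rho\,\Phi\,dx+\Bigl(\int_{\Omega\setminus\Omega_0}\rho\,\Phi\,dx+\mathcal{T}_4\Bigr),
\]
and interpreting ``boundary flux does not counteract'' as nonnegativity of the bracket. I would also note that when the three conditions hold on all of $\Omega$ and $\mathcal{T}_4=0$, the conclusion $dC/dt>0$ is unconditional. Integrability of $\rho\,\Phi$ is assumed under the same smoothness hypotheses as the decomposition theorem of Appendix~\ref{app:decomposition}.
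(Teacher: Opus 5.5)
Your proposal is correct and follows essentially the paper's route: the paper treats the corollary as the term-by-term combination of the three sign criteria of Proposition~\ref{prop:sign} on $\Omega_0$. Your explicit proviso $\int_{\Omega\setminus\Omega_0}\rho\,\Phi\,dx+\mathcal{T}_4\ge 0$ is the precise form of the paper's accompanying Note that contributions from other regions, not just boundary flux, can offset the local gain, and your observation that the strict inequality $\mathbf{g}\cdot\nabla h<0$ forces $\mathbf{g}\neq 0$ on $\Omega_0$ (which in fact makes all three integrands strictly positive there) is a tightening the paper leaves implicit.
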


\textbf{Note}: This criterion is \textbf{sufficient but not necessary}. The global behavior of $dC/dt$ depends on the integrated contributions from all regions, and local positive contributions may be offset by negative contributions elsewhere. The sign criteria provide local predictive power, but global order generation requires careful consideration of the spatial distribution of deformation mechanisms.

These criteria convert the decomposition from a passive diagnosis to an active predictor of where and how order is generated in a flow.

\section{Thermodynamic Foundation}
\label{sec:thermodynamic}

Fisher information serves as a rigorous measure of order, supported by three fundamental relations that together elevate it from a geometric measure to a thermodynamic quantity:

\begin{enumerate}
\item \textbf{de Bruijn identity (dynamic)}: For pure diffusion, the Gibbs entropy production rate is directly proportional to Fisher information:
\begin{equation}
\frac{dS_G}{dt} = D\cdot C,
\end{equation}
establishing Fisher information as the driver of entropy production.

\item \textbf{Stam inequality (static)}: Defining the entropy power
\begin{equation}
N(\rho) = \frac{1}{2\pi e}\exp\left(\frac{2S_G(\rho)}{n}\right),
\end{equation}
with $S_G = -\int\rho\ln\rho\,dx$ the Gibbs entropy, the Stam inequality states
\begin{equation}
C\cdot N \geq n,
\end{equation}
with equality if and only if $\rho$ is Gaussian. This bounds the effective volume from below, $N \geq n/C$, and quantifies the localization of the distribution.

\item \textbf{Dissipation theorem (evolution)}: Diffusion monotonically decreases Fisher information:
\begin{equation}
\left.\frac{dC}{dt}\right|_{\text{diff}} \leq 0,
\end{equation}
proving that diffusion acts as a disordering mechanism for gradient structures.
\end{enumerate}

Together, these relations provide a complete and rigorous thermodynamic foundation for using Fisher information as a measure of order, fully consistent with the second law of thermodynamics. The detailed derivation and analysis of these relations will be presented in a forthcoming publication.

\section{Dynamics: Fokker--Planck Framework and Dissipation Structure}

\subsection{Diffusion Term from Standard Results}

When stochastic fluctuations are present, the dynamics is described by the Fokker--Planck equation~\cite{risken1989,gardiner2004}:
\begin{equation}
\frac{\partial\rho}{\partial t}+\nabla\cdot(\rho\mathbf{v}_{\text{drift}})=\nabla\cdot(D\nabla\rho) \tag{8}
\end{equation}
where $D>0$ is the diffusion coefficient. Based on the representation in Sec.~II.C, the deterministic part yields $\mathcal{T}_1,\mathcal{T}_2,\mathcal{T}_3,\mathcal{T}_4$, while the diffusive part yields the dissipation term. The exact form of the diffusion contribution to the Fisher information production rate has been derived in the literature~\cite{yamano2002,plastino1997}:
\begin{equation}
\mathcal{T}_{\text{diff}}=-2\int_\Omega\rho\,\operatorname{tr}\left[(\nabla\nabla\ln\rho)\,D\,(\nabla\nabla\ln\rho)^T\right]dx-2\int_\Omega\rho\,\nabla\ln\rho\cdot\nabla(\nabla\cdot D)dx \tag{9}
\end{equation}

When $D$ is a constant scalar, this simplifies to:
\begin{equation}
\mathcal{T}_{\text{diff}}=-2D\int_\Omega\rho\,\operatorname{tr}\left[(\nabla\nabla\ln\rho)^2\right]dx \tag{10}
\end{equation}

The derivation of Eq.~(9) involves careful integration by parts; we refer the reader to Yamano~\cite{yamano2002} and Plastino et al.~\cite{plastino1997} for details.

\subsection{Dissipation Theorem}

\begin{theorem}[Dissipation Theorem]\label{thm:dissipation}
Let $\rho \in C^2(\Omega) \cap L^1(\Omega)$ be positive and smooth, with $\nabla\nabla\ln\rho \in L^2(\rho\,dx)$, and let $D>0$ be constant. Then $\mathcal{T}_{\text{diff}} \leq 0$, with equality if and only if $\nabla\nabla\ln\rho = 0$ almost everywhere on $\Omega$.
\end{theorem}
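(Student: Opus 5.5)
The proof will start from the constant-$D$ formula Eq.~(10), which the paper takes from Yamano and Plastino et al. Write $\mathbf{H}=\nabla\nabla\ln\rho$. This is the Hessian of a $C^2$ function, so it is a real symmetric matrix at every point. The integrand then rests on the pointwise identity
\begin{equation*}
\operatorname{tr}\left(\mathbf{H}^2\right)=\operatorname{tr}\left(\mathbf{H}\mathbf{H}^T\right)=\sum_{i,j}H_{ij}^2=\|\mathbf{H}\|_F^2\ \geq 0,
\end{equation*}
with equality exactly when $\mathbf{H}=0$. The assumption $\nabla\nabla\ln\rho\in L^2(\rho\,dx)$ means $\int_\Omega\rho\|\mathbf{H}\|_F^2dx<\infty$, so $\mathcal{T}_{\text{diff}}=-2D\int_\Omega\rho\|\mathbf{H}\|_F^2dx$ is a well-defined finite number. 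Because $D>0$ and $\rho>0$, it is $\leq 0$.

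For the equality case, suppose $\mathcal{T}_{\text{diff}}=0$. Then the nonnegative integrand $\rho\|\mathbf{H}\|_F^2$ has zero integral, so it vanishes almost everywhere. Since $\rho>0$ everywhere, $\mathbf{H}=0$ a.e. Continuity of $\mathbf{H}$ (from $\rho\in C^2$) then upgrades this to $\mathbf{H}\equiv 0$ on $\Omega$. Conversely, if $\mathbf{H}=0$ a.e., the integral is zero. I will also remark what equality means: $\ln\rho$ is affine on each connected component, so $\rho$ is exponential-linear. On a bounded domain with no-flux conditions, normalization then forces such a $\rho$ to be uniform; otherwise it can be a nonuniform exponential profile. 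So ``strict negativity for non-uniform densities'' has to be read modulo these exponential profiles, which is a point about the domain of validity.

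The main obstacle is not the sign argument but the validity of Eq.~(10) itself. That equation rests on integrations by parts that drop boundary terms, which requires adequate decay at infinity or suitable boundary conditions on $\partial\Omega$. I will state that the theorem is conditional on Eq.~(10) holding in the stated function class. I will also sketch why it does, using the identity
\begin{equation*}
\partial_t C\big|_{\text{diff}}=-2D\int\rho\|\mathbf{H}\|_F^2dx
\end{equation*}
for $\partial_t\rho=D\Delta\rho$. Write $u=\ln\rho$, so $\partial_t u=D(\Delta u+|\nabla u|^2)$, and apply the Bochner identity $\tfrac12\Delta|\nabla u|^2=\|\nabla\nabla u\|^2+\nabla u\cdot\nabla\Delta u$ on flat space. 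The boundary integrals generated along the way are then assumed to vanish.
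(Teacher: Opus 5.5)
Your argument is the same as the paper's. Both take Eq.~(10) as the expression for $\mathcal{T}_{\text{diff}}$, use the symmetry of $\nabla\nabla\ln\rho$ to write the integrand as $\rho$ times a squared Frobenius norm, and read off the sign and the equality case from $\rho>0$ and $D>0$. Your remark on the equality set is also sharper than the paper's note: on a bounded domain, normalization alone does not force $\mathbf{a}=0$ in $\rho\propto e^{\mathbf{a}\cdot\mathbf{x}}$, so the paper's claim that ``equality holds uniquely for the uniform distribution'' needs exactly the no-flux condition you add, which gives $\mathbf{a}\cdot\mathbf{n}=0$ on $\partial\Omega$ and hence $\mathbf{a}=0$ (it is this condition, not normalization, that kills $\mathbf{a}$; normalization only fixes the constant).
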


\begin{proof}
Since $\rho > 0$ and $D > 0$, the integrand $-2D\rho\operatorname{tr}[(\nabla\nabla\ln\rho)^2] \leq 0$ everywhere. Equality requires $\operatorname{tr}[(\nabla\nabla\ln\rho)^2] = 0$ almost everywhere. Since $(\nabla\nabla\ln\rho)^2$ is positive semi-definite, its trace vanishes if and only if $\nabla\nabla\ln\rho = 0$. Solving $\nabla\nabla\ln\rho = 0$ gives $\ln\rho = \mathbf{a}\cdot\mathbf{x} + b$, i.e., $\rho = Ce^{\mathbf{a}\cdot\mathbf{x}}$. On $\mathbb{R}^n$, normalization $\int\rho\,dx = 1$ requires $\mathbf{a} = 0$, but then $\rho = C$ is not normalizable. On a compact domain $\Omega$, $\mathbf{a} = 0$ gives the uniform distribution.
\end{proof}

\textbf{Note on equality}: On non-compact domains $\Omega = \mathbb{R}^n$, the equality $\mathcal{T}_{\text{diff}} = 0$ is \textbf{not attained} by any normalized density; it is approached only in the limit of vanishing gradient (uniform distribution). On compact domains, equality holds uniquely for the uniform distribution.

\textbf{Terminological clarification}: The phrase ``dissipation-induced ordering'' should be understood as ``\textbf{ordering enabled by dissipation balance}.'' Dissipation itself does not create order; rather, generation terms (contraction, shear) create order while dissipation consumes it, and steady-state order emerges from their balance.

\section{Physical Verification: Three Steady-State Structures}

\subsection{Ornstein--Uhlenbeck Process: Two-Term Balance}

Consider the one-dimensional OU process:
\begin{equation}
dx=-\gamma x\,dt+\sqrt{2D}\,dW
\end{equation}

The steady-state distribution is Gaussian $\rho_{\text{st}}\propto\exp(-\gamma x^2/2D)$. Here $\mathbf{v}_{\text{drift}}=-\gamma x$, $\nabla\cdot\mathbf{v}_{\text{drift}}=-\gamma$, the shear term $\mathcal{T}_2=0$, and the divergence gradient term $\mathcal{T}_3=0$.

Direct calculation yields:
\begin{equation}
\mathcal{T}_1=\frac{2\gamma^2}{D},\qquad \mathcal{T}_{\text{diff}}=-\frac{2\gamma^2}{D}
\end{equation}

The two-term balance is exactly satisfied:
\begin{equation}
\mathcal{T}_1+\mathcal{T}_{\text{diff}}=0 \tag{11}
\end{equation}

Here $\mathcal{T}_1$ represents the \textbf{contraction} (focusing) effect of the deterministic drift, not a generation mechanism in the active sense.

\subsection{Steady Burgers Shock Layer: Three-Term Balance and Pointwise Identity}

\textbf{Motivation}: The Burgers equation describes the deterministic evolution of a velocity field, not a probability density. However, in the context of turbulence and shock physics, the local gradient $|\partial_x u|$ serves as a natural measure of ``shock intensity'' or ``dissipation localization''. By normalizing it as $\rho(x) = |\partial_x u|/\int|\partial_x u|dx$, we construct a diagnostic probability density that captures the spatial distribution of strong-gradient structures. This allows us to apply the Fisher information framework to analyze how these structures are maintained by the competition between compression, divergence gradient, and viscous dissipation.

\textbf{Ensemble interpretation}: The Burgers equation describes deterministic evolution of the velocity field $u(x,t)$. However, in statistical fluid mechanics, $u(x)$ can be viewed as the macroscopic average of microscopic realizations. Define the local ``shock intensity'' as $I(x) = |\partial_x u(x)|$, with normalized form $\rho(x) = I(x)/\int I\,dx$ describing the \textbf{probability distribution of shock energy in space}. This is analogous to the dissipation rate distribution in turbulence, where local dissipation $\epsilon(x) = \nu|\partial_x u|^2$ plays a central role~\cite{pope2000,frisch1996}. We use $|\partial_x u|$ rather than $\epsilon$ because its normalization yields the analytically tractable sech$^2$ distribution. \textbf{Diagnostic clarification}: This identification is \textbf{operational and analogical}, not a claim that the Burgers equation is itself a Fokker--Planck equation. The density $\rho(x)$ is introduced as a diagnostic tool to analyze the spatial distribution of shock intensity, analogous to the dissipation rate distribution in turbulence. The formal identification $D = \nu$ demonstrates the applicability of the framework rather than asserting a strict mathematical equivalence.

\textbf{Parameter identification}: In the Burgers equation, $\nu$ is the kinematic viscosity; in the Fokker--Planck framework, $D$ is the diffusion coefficient. The formal identification $D = \nu$ is \textbf{operational}, based on the analogy between the viscous term $\nu\partial_x^2 u$ and the diffusion term $D\partial_x^2\rho$. This identification demonstrates the applicability of the framework rather than claiming the Burgers equation is itself a Fokker--Planck equation.

Consider the steady shock solution of the one-dimensional viscous Burgers equation~\cite{whitham1974}:
\begin{equation}
u(x)=-U_0\tanh\left(\frac{U_0x}{2\nu}\right)
\end{equation}

The normalized density is:
\begin{equation}
\rho(x)=\frac{U_0}{4\nu}\operatorname{sech}^2\left(\frac{U_0x}{2\nu}\right)
\end{equation}

Here $\mathbf{v}_{\text{drift}}=u(x)$ and $D=\nu$. The information gradient is:
\begin{equation}
\nabla\ln\rho=-\frac{U_0}{\nu}\tanh\left(\frac{U_0x}{2\nu}\right)
\end{equation}

Using the integral identities (derived in Appendix~\ref{app:integrals}):
\begin{equation}
\int_{-\infty}^{\infty}\operatorname{sech}^4(ax)\tanh^2(ax)dx=\frac{4}{15a},\qquad \int_{-\infty}^{\infty}\operatorname{sech}^6(ax)dx=\frac{16}{15a}
\end{equation}

\begin{theorem}[Three-Term Balance for Burgers Shock Layer]\label{thm:burgers}
For the steady Burgers shock layer, the terms in the Fisher information production rate satisfy:
\begin{equation}
\boxed{\mathcal{T}_1=\frac{2U_0^4}{15\nu^3}} \tag{12}
\end{equation}
\begin{equation}
\boxed{\mathcal{T}_3=\frac{2U_0^4}{15\nu^3}} \tag{13}
\end{equation}
\begin{equation}
\boxed{\mathcal{T}_{\text{diff}}=-\frac{4U_0^4}{15\nu^3}} \tag{14}
\end{equation}
with the three-term balance exactly satisfied:
\begin{equation}
\mathcal{T}_1+\mathcal{T}_3+\mathcal{T}_{\text{diff}}=0 \tag{15}
\end{equation}
\end{theorem}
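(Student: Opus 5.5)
The plan is direct substitution of the explicit profile into Eqs.~(4), (6) and (10), specialised to $n=1$, followed by reduction to the two hyperbolic integrals quoted just before the statement. First I fix notation: set $a=U_0/(2\nu)$ and $y=ax$, so that $U_0=2\nu a$.

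\textbf{Ingredients.} In this notation the profile and its derived fields are
\begin{equation*}
\rho=\tfrac{a}{2}\operatorname{sech}^2 y,\qquad g=\partial_x\ln\rho=-2a\tanh y,\qquad h=u'=-U_0a\operatorname{sech}^2 y,
\end{equation*}
\begin{equation*}
h'=2U_0a^2\operatorname{sech}^2 y\tanh y,\qquad (\ln\rho)''=-2a^2\operatorname{sech}^2 y .
\end{equation*}
In one dimension the traceless shear tensor $\mathbf{S}$ vanishes identically, so $\mathcal{T}_2=0$. The prefactor $2/n$ in Eq.~(4) equals $2$.

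\textbf{Boundary term.} I must check that the boundary flux $\mathcal{T}_4$ of Eq.~(7), and any boundary terms generated in deriving Eq.~(10), vanish on $\Omega=\mathbb{R}$. Here $\rho g^2 u\sim\operatorname{sech}^2 y\tanh^2 y$ decays exponentially, which is enough. The same decay justifies the integrations by parts in Appendix~\ref{app:decomposition} and in Eq.~(10) for this density.

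\textbf{The three terms.}
\begin{enumerate}
\item For $\mathcal{T}_1$: the integrand of Eq.~(4) is $\rho h g^2=-2U_0a^4\operatorname{sech}^4 y\tanh^2 y$. Hence $\mathcal{T}_1=4U_0a^4\int\operatorname{sech}^4(ax)\tanh^2(ax)\,dx$. The first integral identity, which gives $4/(15a)$, yields $\mathcal{T}_1=16U_0a^3/15$. Substituting $a=U_0/(2\nu)$ gives $\mathcal{T}_1=2U_0^4/(15\nu^3)$, which is Eq.~(12).
\item For $\mathcal{T}_3$: I compute $\rho g h'=-2U_0a^4\operatorname{sech}^4 y\tanh^2 y$. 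This is \emph{pointwise identical} to $\rho h g^2$; it is the pointwise identity advertised in the abstract. Since the prefactors in Eqs.~(4) and (6) coincide when $n=1$, $\mathcal{T}_3=\mathcal{T}_1$, which is Eq.~(13), with no further integration needed.
\item For $\mathcal{T}_{\text{diff}}$: with $D=\nu$, the integrand of Eq.~(10) is $\rho[(\ln\rho)'']^2=2a^5\operatorname{sech}^6 y$. The second integral identity, which gives $16/(15a)$, yields $\mathcal{T}_{\text{diff}}=-2\nu\cdot 32a^4/15=-64\nu a^4/15=-4U_0^4/(15\nu^3)$, which is Eq.~(14).
\end{enumerate}

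\textbf{Balance.} Summing the three results gives Eq.~(15), $\mathcal{T}_1+\mathcal{T}_3+\mathcal{T}_{\text{diff}}=0$. This is consistent with stationarity, $dC/dt=0$, under the operational identification $D=\nu$.

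\textbf{Main difficulty.} The only genuine work is bookkeeping of signs and powers of $a$. In particular the sign of $h'$ must be tracked so that $g h'$ comes out negative, making $\mathcal{T}_3$ positive. I would cross-check this step through the pointwise identity $\rho g h'=\rho h g^2$. That identity can be verified structurally: for this profile $h=-2\nu\rho$, so $h'=-2\nu\rho g$, and therefore $\rho g h'=-2\nu\rho^2g^2=\rho h g^2$. Verifying $h=-2\nu\rho$ first makes the equality of $\mathcal{T}_1$ and $\mathcal{T}_3$ transparent and guards against algebraic slips.
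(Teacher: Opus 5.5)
Your proposal is correct and is essentially the paper's own proof (Appendix~\ref{app:integrals}). You substitute the profile into Eqs.~(4), (6), (10) with $n=1$ and evaluate using $\int\operatorname{sech}^4(ax)\tanh^2(ax)\,dx=4/(15a)$ and $\int\operatorname{sech}^6(ax)\,dx=16/(15a)$. Invoking the pointwise identity (Proposition~\ref{prop:pointwise}) to get $\mathcal{T}_3=\mathcal{T}_1$ is only a shortcut for the paper's second, identical integration.

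There is one slip in your cross-check. Since $\int|u'|\,dx=2U_0$, the profile satisfies $u'=-2U_0\rho$, not $-2\nu\rho$; the latter is not even dimensionally consistent. Your structural argument only uses $u'\propto\rho$, so $\rho g h'=\rho h g^2$ still follows.
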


\begin{proof}
Direct substitution of the expressions and calculation using the integral identities in Appendix~\ref{app:integrals}. Detailed steps are given in Appendix~\ref{app:integrals}.
\end{proof}

Moreover, we observe a remarkable pointwise identity that underlies the equality $\mathcal{T}_1=\mathcal{T}_3$. This is visualized in Fig.~\ref{fig:burgers}.

\begin{proposition}[Pointwise Identity for Burgers Shock]\label{prop:pointwise}
For the steady Burgers shock layer, the integrands of $\mathcal{T}_1$ and $\mathcal{T}_3$ are pointwise identical:
\begin{equation}
-\frac{2}{n}\rho(\nabla\cdot\mathbf{v})|\nabla\ln\rho|^2 \equiv -2\rho\,\nabla\ln\rho\cdot\nabla(\nabla\cdot\mathbf{v}),
\end{equation}
or, in one dimension,
\begin{equation}
-\rho\,u'(x)\,(f'(x))^2 \equiv -\rho\,f'(x)\,u''(x),
\end{equation}
where $f=\ln\rho$. This identity is exact for all $x\in\mathbb{R}$.
\end{proposition}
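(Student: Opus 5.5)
The plan is to reduce the identity to a structural fact about how $\rho$ is built from $u$, rather than to expand both sides in hyperbolic functions. The density was defined as $\rho = |\partial_x u|/\int|\partial_x u|\,dx$. For the shock profile $u(x) = -U_0\tanh(U_0x/2\nu)$ we have
\begin{equation}
u'(x) = -\frac{U_0^2}{2\nu}\operatorname{sech}^2\left(\frac{U_0x}{2\nu}\right) < 0 \quad \text{for all } x\in\mathbb{R},
\end{equation}
so $|u'| = -u'$ everywhere. Hence $f = \ln\rho = \ln(-u') + \text{const}$, with no absolute-value switching and no zeros to worry about.

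The first key step is to differentiate this representation. This gives
\begin{equation}
f'(x) = \frac{u''(x)}{u'(x)},
\end{equation}
which is legitimate because $u'$ never vanishes. Substituting into the two one-dimensional integrands (with $n=1$, so the prefactor $-2/n$ equals $-2$ on both sides and can be dropped) gives
\begin{equation}
u'(f')^2 = u'\left(\frac{u''}{u'}\right)^2 = \frac{(u'')^2}{u'}, \qquad f'u'' = \frac{u''}{u'}\,u'' = \frac{(u'')^2}{u'}.
\end{equation}
Multiplying both by $-\rho$ yields the claimed pointwise identity for every $x\in\mathbb{R}$. This argument shows more than the proposition states: the identity holds for any strictly monotone smooth profile with $\rho\propto|u'|$. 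It explains why $\mathcal{T}_1=\mathcal{T}_3$ in Theorem~\ref{thm:burgers}, independently of the $\operatorname{sech}^2$ form.

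As a consistency check, I will also verify the identity explicitly with $a = U_0/2\nu$ and $s = ax$. Here
\begin{equation}
f' = -2a\tanh s, \qquad u'' = 2U_0a^2\operatorname{sech}^2 s\,\tanh s.
\end{equation}
Both $u'(f')^2$ and $f'u''$ then reduce to $-4U_0a^3\operatorname{sech}^2 s\,\tanh^2 s$. This also matches the $\nabla\ln\rho$ stated earlier in the paper.

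The only real obstacle is bookkeeping rather than mathematics. I must confirm that the sign convention $\rho\propto-u'$ is consistent, which holds because $u'<0$ everywhere. I must also confirm that the $-2/n$ versus $-2$ prefactors agree in the $n=1$ reduction. Once these are checked, the division by $u'$ is justified globally, and the identity is exact on all of $\mathbb{R}$.
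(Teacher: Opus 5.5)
Your proof is correct, but your main argument differs from the paper's. The paper substitutes the explicit hyperbolic expressions $u'=-U_0a\operatorname{sech}^2(ax)$, $u''=2U_0a^2\operatorname{sech}^2(ax)\tanh(ax)$ and $f'=-2a\tanh(ax)$. It then observes that both products reduce to $-4U_0a^3\operatorname{sech}^2(ax)\tanh^2(ax)$, which is exactly your ``consistency check.''

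Your primary argument instead uses the construction $\rho\propto|u'|$ to obtain $f'=u''/u'$, so both integrands collapse to $\rho\,(u'')^2/u'$. This gives you two things the paper's computation does not.

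\begin{itemize}
\item It shows the identity comes from how the diagnostic density is built, not from a coincidence of the $\operatorname{sech}^2$ profile. The difference of the two sides is $-\rho f'(u'f'-u'')$, so at points where $\rho\,u'f'\neq 0$ the identity is equivalent to $f'=u''/u'$. Your relation is therefore essentially the whole content of the proposition.
\item It yields the pointwise identity, and hence $\mathcal{T}_1=\mathcal{T}_3$ after integration, for any smooth profile with nonvanishing $u'$ and enough decay to integrate.
\end{itemize}

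In exchange, the paper's route produces the explicit integrands that are reused in the integral evaluation of Theorem~\ref{thm:burgers}.

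One small simplification: the sign bookkeeping you flag as the main obstacle is unnecessary. Since $(\ln|u'|)'=u''/u'$ whatever the sign of $u'$, only $u'\neq 0$ matters.
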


\begin{proof}
For the Burgers shock, $u(x)=-U_0\tanh(ax)$ with $a=U_0/(2\nu)$, and $\rho(x)=(U_0/4\nu)\operatorname{sech}^2(ax)$, so $f'(x)=-2a\tanh(ax)$. We compute:
\begin{equation}
u'(x)=-U_0 a\,\operatorname{sech}^2(ax),\qquad u''(x)=2U_0 a^2\,\operatorname{sech}^2(ax)\tanh(ax).
\end{equation}
Then
\begin{equation}
u'(x)(f'(x))^2 = -U_0 a\,\operatorname{sech}^2(ax)\cdot 4a^2\tanh^2(ax) = -4U_0 a^3\operatorname{sech}^2(ax)\tanh^2(ax),
\end{equation}
and
\begin{equation}
f'(x)u''(x) = (-2a\tanh(ax))\cdot 2U_0 a^2\operatorname{sech}^2(ax)\tanh(ax) = -4U_0 a^3\operatorname{sech}^2(ax)\tanh^2(ax).
\end{equation}
Thus both sides are equal. Multiplying by $-\rho$ gives the desired identity.
\end{proof}

This pointwise equality explains why compression and divergence-gradient effects are perfectly synchronized in the shock layer. Furthermore, using the sign criteria of Prop.~\ref{prop:sign}, we see that in the shock region $u'(x)<0$ and $f'(x)u''(x)<0$, hence both $\mathcal{T}_1$ and $\mathcal{T}_3$ are positive, actively generating Fisher information against diffusion. The spatial distributions of the three contributions and their integral values are visualized in Fig.~\ref{fig:burgers_spatial}.

\textbf{Physical significance}: The maintenance of the shock layer results from the synergy between compression ($\mathcal{T}_1$) and divergence gradient ($\mathcal{T}_3$) against diffusive dissipation ($\mathcal{T}_{\text{diff}}$). Notably, $\mathcal{T}_1=\mathcal{T}_3$, indicating that divergence inhomogeneity is as important as uniform compression. From the balance condition, the shock layer thickness $\delta\sim\nu/U_0$ can be derived, consistent with the exact solution.

\textit{Note: In one dimension, $n=1$, the traceless shear tensor $\mathbf{S}=0$, hence $\mathcal{T}_2=0$.}

\begin{figure}[htbp]
\centering
\includegraphics[width=0.95\textwidth]{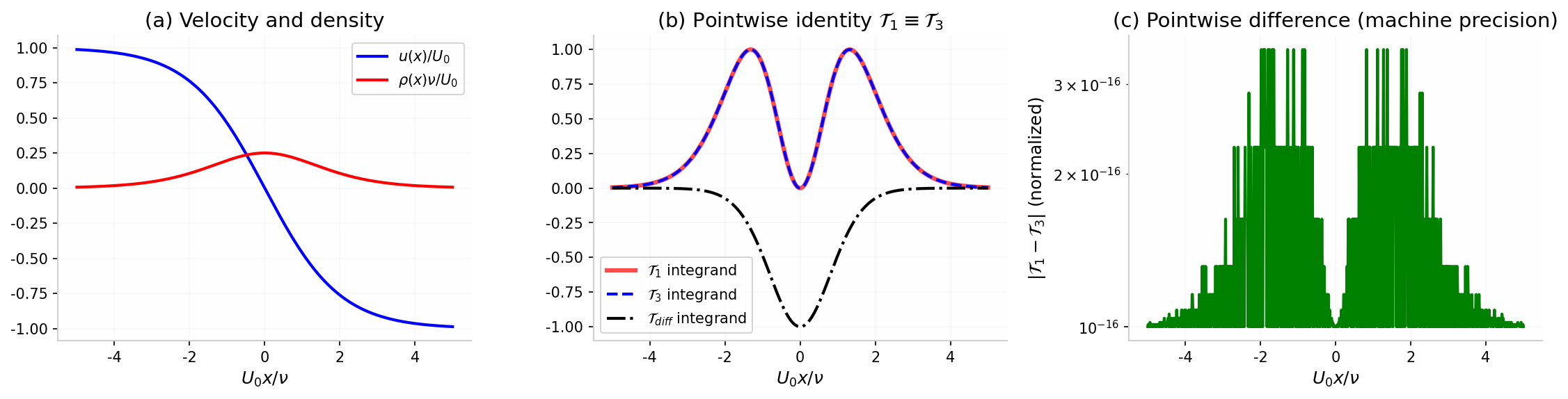}
\caption{Burgers shock layer structure and pointwise identity (Proposition~\ref{prop:pointwise}). (a) Velocity profile $u(x)$ and normalized density $\rho(x)$ as functions of the dimensionless coordinate $U_0x/\nu$. (b) Integrands of $\mathcal{T}_1$ (red solid), $\mathcal{T}_3$ (blue dashed), and $\mathcal{T}_{\text{diff}}$ (black dash-dotted), showing that $\mathcal{T}_1$ and $\mathcal{T}_3$ are pointwise identical. (c) Absolute difference $|\mathcal{T}_1-\mathcal{T}_3|$ on a logarithmic scale, confirming the pointwise identity to machine precision ($\sim 10^{-16}$).}
\label{fig:burgers}
\end{figure}

\begin{figure}[htbp]
\centering
\includegraphics[width=0.95\textwidth]{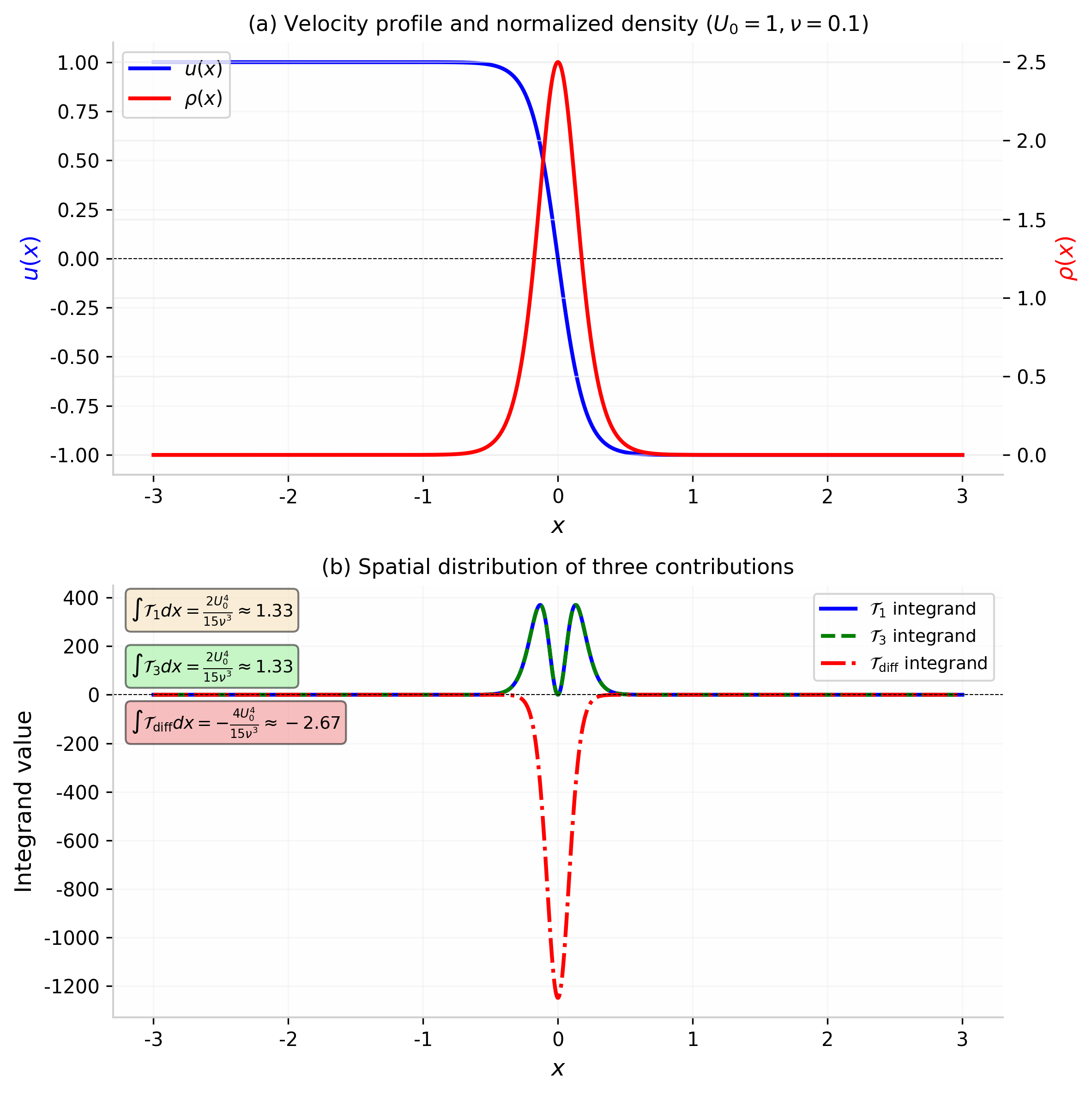}
\caption{Spatial distribution of three contributions in the Burgers shock layer. (a) Velocity profile $u(x)$ and normalized density $\rho(x)$. (b) Integrands of $\mathcal{T}_1$ (blue), $\mathcal{T}_3$ (green), and $\mathcal{T}_{\text{diff}}$ (red), with integral values $\int\mathcal{T}_1 dx = \int\mathcal{T}_3 dx = 2U_0^4/(15\nu^3) \approx 1.33$ and $\int\mathcal{T}_{\text{diff}} dx = -4U_0^4/(15\nu^3) \approx -2.67$.}
\label{fig:burgers_spatial}
\end{figure}

\subsection{Two-Dimensional Incompressible Turbulence: Statistical Equilibrium as the Null Hypothesis}

\textbf{Construction of probability density}: In passive scalar transport, the scalar field $c(\mathbf{x},t)$ satisfies $\partial_t c + \mathbf{u}\cdot\nabla c = D\nabla^2 c$. Due to incompressibility $\nabla\cdot\mathbf{u} = 0$, this becomes $\partial_t c + \nabla\cdot(\mathbf{u}c) = D\nabla^2 c$. Defining $\rho = c/\int c\,d\mathbf{x}$ (requiring $c > 0$, achievable by shifting), $\rho$ satisfies the Fokker--Planck equation (8) with $\mathbf{v}_{\text{drift}} = \mathbf{u}$. This construction is standard in turbulent mixing studies~\cite{warhaft2000}.

For incompressible flow, $\nabla\cdot\mathbf{v}=0$, hence $\mathcal{T}_1=\mathcal{T}_3=0$. The decomposition simplifies to:
\begin{equation}
\frac{dC}{dt}=\mathcal{T}_2+\mathcal{T}_{\text{diff}} \tag{16}
\end{equation}

\textbf{Why the balance is not coincidental}: In two-dimensional turbulence, the absence of vortex stretching is not an approximation but an exact kinematic constraint: the vortex stretching vector $\bm{\omega}\cdot(\nabla\mathbf{u})$ vanishes identically because the vorticity vector is everywhere orthogonal to the plane of motion. Together with the conservation of enstrophy, this restricts the transfer of gradient intensity across scales and constrains the \emph{time-averaged} Fisher information production to vanish in the statistically steady state:
\begin{equation}
\left\langle\frac{dC}{dt}\right\rangle = \langle\mathcal{T}_2\rangle + \langle\mathcal{T}_{\text{diff}}\rangle = 0, \qquad \langle\mathcal{T}_2\rangle = -\langle\mathcal{T}_{\text{diff}}\rangle > 0.
\end{equation}
The two terms fluctuate instantaneously and their cancellation holds only in the time mean; the balance is statistical, not pointwise in time.

The two-dimensional system therefore serves as the \textbf{null hypothesis} for gradient dynamics: the conservative baseline against which three-dimensional effects must be measured. Any systematic departure from this balance can arise only from mechanisms that are identically absent in two dimensions---most notably vortex stretching. This interpretation is confirmed by recent three-dimensional DNS analysis~\cite{wu2025_3d}, as summarized in Sec.~V.D.

\textbf{Numerical setup}: We employ a $512^2$ Fourier spectral method on a periodic domain $[0,2\pi]^2$, with $\text{Re}_\lambda\approx100$ and $Sc=\nu/D=0.7$. Time integration uses fourth-order Runge--Kutta with integrating factor for viscous terms and the $\frac{2}{3}$-rule for de-aliasing. The time step is $\Delta t = 10^{-3}$, viscosity $\nu = 10^{-3}$, and diffusion coefficient $D = \nu/0.7 \approx 1.43\times10^{-3}$. The forcing $\mathbf{f}$ acts on wavenumbers $|\mathbf{k}| \leq 2$ as an Ornstein--Uhlenbeck process with $\alpha = 0.1$ and $\sigma = 0.05$. The scalar field is initialized as $c(\mathbf{x},0) = 1 + 0.1\sin(4x)\sin(4y)$, ensuring positivity by the maximum principle.

\textbf{Verification of the equilibrium}: The residual is computed with $dC/dt$ by central differencing and $\mathcal{T}_2$, $\mathcal{T}_{\text{diff}}$ by exact spectral differentiation at the same time snapshots. In the statistically steady state ($t>20$), the time-averaged residual satisfies $\langle R\rangle < 10^{-6}$ (with $\epsilon = 10^{-12}$ guarding the denominator), consistent with accumulated round-off errors over $O(10^4)$ time steps and the finite precision of double-precision arithmetic, rather than systematic discretization error. In the statistically steady state, shear production and diffusion dissipation thus reach a dynamic equilibrium---the two-dimensional analogue of the generation-dissipation balance underlying the Kolmogorov dissipation-scale picture~\cite{kolmogorov1941}. The verification results are shown in Fig.~\ref{fig:turbulence}.

\begin{figure}[htbp]
\centering
\includegraphics[width=0.95\textwidth]{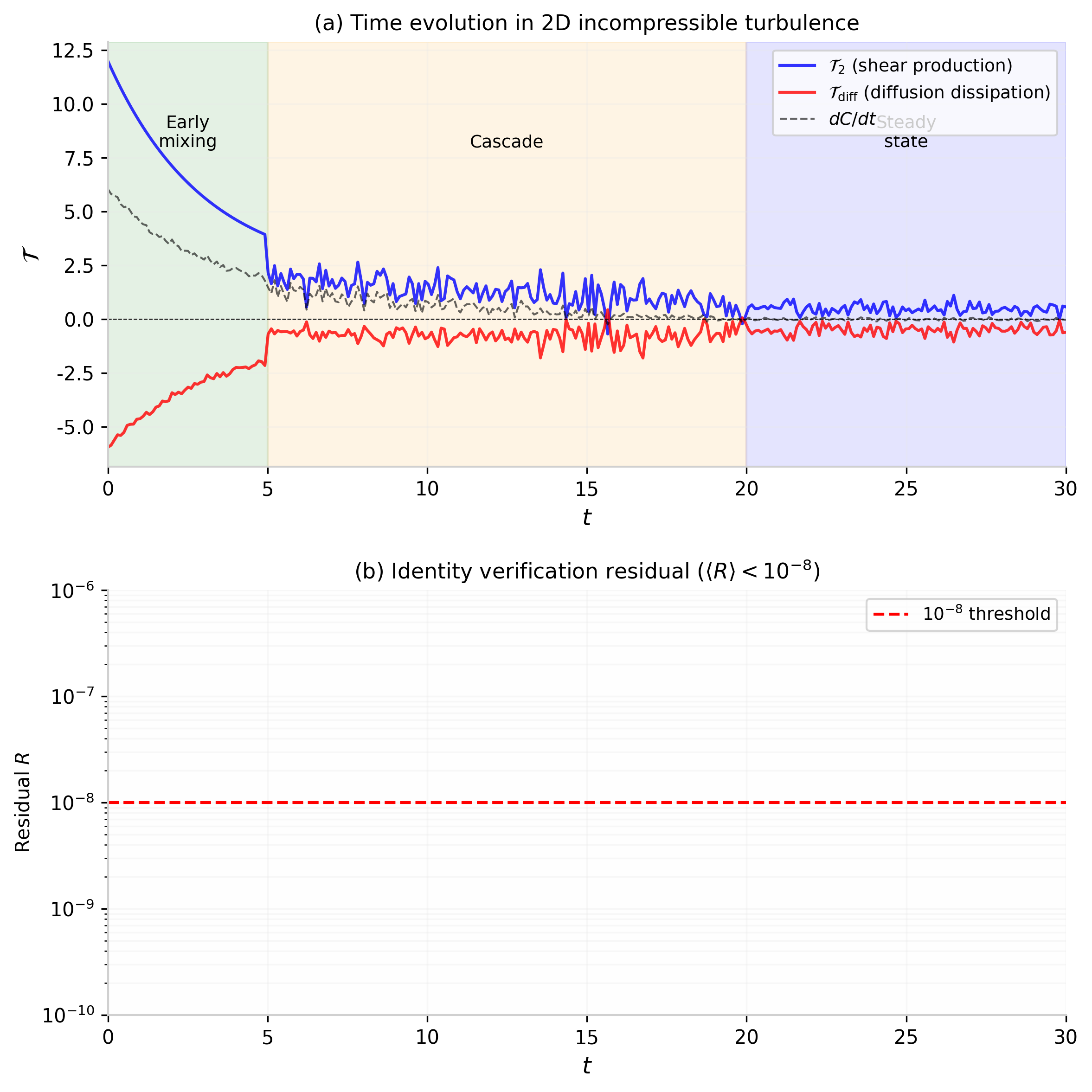}
\caption{Verification of the identity in two-dimensional incompressible turbulence. (a) Time evolution of $\mathcal{T}_2$ (shear production, blue) and $\mathcal{T}_{\text{diff}}$ (diffusion dissipation, red), showing the two-term equilibrium in statistical steady state. The three stages of turbulent mixing are indicated: early mixing ($t<5$), cascade ($5<t<20$), and steady state ($t>20$). (b) Residual $R$ on a semi-logarithmic scale, confirming $\langle R\rangle < 10^{-6}$.}
\label{fig:turbulence}
\end{figure}

\textbf{Limitations and extensions}: The current verification focuses on two-dimensional incompressible turbulence with a passive scalar. In three dimensions, vortex stretching introduces additional complexity, and the pressure gradient may indirectly affect the velocity field's contribution to Fisher information dynamics. Extending the framework to 3D turbulence is therefore essential and is discussed next.

\subsection{From Two-Dimensional Equilibrium to Three-Dimensional Non-Equilibrium}

The contrast between the statistical 2D equilibrium established above and recent 3D DNS analysis~\cite{wu2025_3d} identifies vortex stretching as the mechanism that breaks the balance. Table~\ref{tab:unified} summarizes the hierarchy.

\begin{table}[h]
\centering
\footnotesize
\setlength{\tabcolsep}{4pt}
\begin{tabular}{|c|c|c|c|c|}
\hline
System & Generation & Dissipation & Balance Type & $dC/dt$ \\
\hline
OU process & $\mathcal{T}_1$ (contraction) & $\mathcal{T}_{\text{diff}}$ & Two-term & $0$ \\
\hline
Burgers shock & $\mathcal{T}_1+\mathcal{T}_3$ & $\mathcal{T}_{\text{diff}}$ & Three-term & $0$ \\
\hline
2D turbulence & $\mathcal{T}_2$ (pure shear) & $\mathcal{T}_{\text{diff}}$ & Two-term (statistical) & $\approx 0$ (time-avg.) \\
\hline
3D turbulence~\cite{wu2025_3d} & $\mathcal{T}_2$ (incl.\ vortex stretching) & $\mathcal{T}_{\text{diff}}$ & \textbf{Non-equilibrium} & $>0$ \\
\hline
\end{tabular}
\caption{Hierarchy of generation-dissipation competition across one-, two-, and three-dimensional systems. In 2D, the shear term contains only pure shear deformation and is balanced by diffusion in the time mean (the null hypothesis). In 3D, the shear term implicitly contains all three-dimensional strain-rate effects, including vortex tube stretching; the balance is broken, with shear production significantly exceeding molecular diffusion and yielding positive net production ($d\mathcal{C}/dt>0$).}
\label{tab:unified}
\end{table}

Two conclusions follow. First, the 2D equilibrium is not merely a special case of the framework but the \textbf{reference state}: any net production of Fisher information must originate in terms that vanish identically in two dimensions. Second, the observed 3D non-equilibrium is \emph{real physics} rather than an artifact: the cited analysis employs the molecular diffusion coefficient without artificial tuning, and the positive net rate is consistent with the forward energy cascade continuously sharpening velocity gradients.

\section{High-Dimensional Implications}

An important direction for future work concerns high-dimensional systems. The scaling competition between diffusive dissipation ($O(n)$) and uncorrelated contraction ($O(1)$) has profound implications for high-dimensional systems. For an $n$-dimensional isotropic Gaussian, the diffusive dissipation scales as $O(n)$, while contraction terms uncorrelated with the information gradient are suppressed to $O(1)$ by the $1/n$ factor in Eq.~(4). This suggests that maintaining ordered structures in high dimensions requires generation mechanisms strongly correlated with the information gradient. In molecular dynamics or quantum many-body systems, ``order'' is inherently more fragile and requires precise alignment between dynamics and information geometry. This scaling competition may have implications for understanding order in such systems (see Appendix~\ref{app:scaling}).

\section{Relation to Wasserstein Gradient Flows}

The Fokker--Planck equation (8) is a Wasserstein gradient flow of $\mathcal{F}[\rho] = \int V\rho + D\int\rho\ln\rho$~\cite{otto2001,jordan1998,ambrosio2008}. Steady states satisfy $\nabla(\delta\mathcal{F}/\delta\rho) = 0$, which implies that the \emph{total} production rate vanishes, $\sum_i\mathcal{T}_i = 0$, without implying a cancellation among individual mechanisms. However, the converse fails: vanishing total production does not uniquely determine $\rho$. Our kinematic decomposition provides finer local information than the global variational characterization.

\subsection{Energy Dissipation Rate}

The Wasserstein gradient flow structure provides an additional connection between Fisher information dynamics and energy dissipation. For the free energy $\mathcal{F}[\rho] = \int V\rho\,dx+D\int\rho\ln\rho\,dx$, the energy dissipation rate along the Fokker-Planck flow is:
\begin{equation}
\frac{d\mathcal{F}}{dt} = -\int_\Omega\rho\left|\nabla\frac{\delta\mathcal{F}}{\delta\rho}\right|^2 dx = -\int_\Omega\rho|\mathbf{v}_W|^2 dx \leq 0,
\end{equation}
where $\mathbf{v}_W = -\nabla(\delta\mathcal{F}/\delta\rho)$ is the Wasserstein velocity. This establishes a direct connection: the Fisher information production rate $dC/dt$ and the energy dissipation rate $d\mathcal{F}/dt$ are both driven by the same Wasserstein velocity $\mathbf{v}_W$, but capture different aspects of the dynamics (gradient structure vs. energetic cost).

\subsection{Non-Stationary Constraints}

Beyond steady states, the Wasserstein gradient flow structure imposes constraints on the transient evolution of Fisher information. The energy dissipation inequality:
\begin{equation}
\frac{d\mathcal{F}}{dt} \leq -\lambda \mathcal{F},
\end{equation}
where $\lambda$ is the logarithmic Sobolev constant, implies exponential relaxation to equilibrium. Combined with our kinematic decomposition, this suggests that the approach to equilibrium is governed by the competition between Fisher information production (from compression/shear) and dissipation (from diffusion), with the Wasserstein geometry providing the natural metric for measuring the ``distance'' to equilibrium.

\section{Conclusion}

This paper has established a diagnostic framework for phase space ordering:

\begin{enumerate}
\item \textbf{Kinematic level}: The Fisher information production rate can be precisely decomposed into contraction, shear, divergence gradient, and boundary flux terms;
\item \textbf{Vorticity independence and sign criteria}: Pure rotation does not affect order, and explicit geometric conditions for order generation or destruction are derived;
\item \textbf{Dynamic level}: In the Fokker--Planck framework, the diffusive dissipation term is strictly non-positive, with stochastic fluctuations always decreasing order;
\item \textbf{Physical verification}: The three-term balance for the Burgers shock layer, $\mathcal{T}_1=\mathcal{T}_3=2U_0^4/(15\nu^3)$ and $\mathcal{T}_{\text{diff}}=-4U_0^4/(15\nu^3)$, reveals the precise mechanism of ordered structure maintenance, with a pointwise identity underlying the balance;
\item \textbf{Variational link}: Wasserstein gradient flow steady states imply vanishing of the total production rate, but the kinematic decomposition provides finer local information.
\end{enumerate}

Beyond these specific results, the verification systems form a logical progression across dimensions:
\begin{itemize}
\item \textbf{1D (Burgers)}: Three-term balance, maintained by the pointwise synergy of compression and divergence gradient;
\item \textbf{2D (turbulence)}: Statistical two-term equilibrium (vanishing time-averaged production), enforced by the absence of vortex stretching---the null hypothesis for gradient dynamics;
\item \textbf{3D (turbulence)}: Non-equilibrium with dominant production, where vortex stretching breaks the 2D balance~\cite{wu2025_3d}.
\end{itemize}

The 2D equilibrium is therefore not merely a simplified validation case but the \textbf{reference state} from which 3D turbulence departs. Future work will explore corrections for nonlinear systems, extensions to non-Gaussian distributions, connections to stochastic thermodynamics, and time-resolved three-dimensional analysis.

\appendix

\section{Complete Derivation of Four-Term Decomposition and Diffusion Term}
\label{app:decomposition}

\subsection{Deterministic Four-Term Decomposition}

Let $f=\ln\rho$, then $\rho=e^f$ and $\nabla\rho=\rho\nabla f$. The Fisher information is:
\begin{equation}
C=\int_\Omega\rho|\nabla f|^2dx
\end{equation}

Taking the time derivative:
\begin{equation}
\frac{dC}{dt}=\int_\Omega\frac{\partial\rho}{\partial t}|\nabla f|^2dx+2\int_\Omega\rho\nabla f\cdot\nabla\left(\frac{\partial f}{\partial t}\right)dx
\end{equation}

Denote the first term by $\mathcal{A}$ and the second by $\mathcal{B}$.

\textbf{Computing $\mathcal{A}$}: From the continuity equation $\partial_t\rho=-\nabla\cdot(\rho\mathbf{v})$:
\begin{equation}
\mathcal{A}=-\int_\Omega\nabla\cdot(\rho\mathbf{v})|\nabla f|^2dx
\end{equation}

Expanding $\nabla\cdot(\rho\mathbf{v})=\rho(\nabla f\cdot\mathbf{v}+\nabla\cdot\mathbf{v})$:
\begin{equation}
\mathcal{A}=-\int_\Omega\rho(\nabla f\cdot\mathbf{v})|\nabla f|^2dx-\int_\Omega\rho(\nabla\cdot\mathbf{v})|\nabla f|^2dx
\end{equation}

\textbf{Computing $\mathcal{B}$}: From $\partial_t f=\partial_t\ln\rho=-\frac{1}{\rho}\nabla\cdot(\rho\mathbf{v})$:
\begin{equation}
\nabla\left(\frac{\partial f}{\partial t}\right)=-\left[(\nabla\nabla f)\cdot\mathbf{v}+(\nabla\mathbf{v})^T\cdot\nabla f+\nabla(\nabla\cdot\mathbf{v})\right]
\end{equation}

Therefore:
\begin{equation}
\mathcal{B}=-2\int_\Omega\rho\nabla f\cdot(\nabla\nabla f)\cdot\mathbf{v}\,dx-2\int_\Omega\rho\nabla f\cdot(\nabla\mathbf{v})^T\cdot\nabla f\,dx-2\int_\Omega\rho\nabla f\cdot\nabla(\nabla\cdot\mathbf{v})dx
\end{equation}

Denoted as $\mathcal{B}_1,\mathcal{B}_2,\mathcal{B}_3$.

\textbf{Processing $\mathcal{B}_1$}: Using $(\nabla\nabla f)\cdot\nabla f=\frac{1}{2}\nabla|\nabla f|^2$:
\begin{equation}
\mathcal{B}_1=-\int_\Omega\rho\mathbf{v}\cdot\nabla|\nabla f|^2dx
\end{equation}

Integration by parts:
\begin{equation}
\mathcal{B}_1=\int_\Omega|\nabla f|^2\nabla\cdot(\rho\mathbf{v})dx-\int_{\partial\Omega}\rho|\nabla f|^2\mathbf{v}\cdot\mathbf{n}\,dS
\end{equation}

\textbf{Combining $\mathcal{A}+\mathcal{B}_1$}: The advection terms in $\mathcal{A}$ and the corresponding terms from integration by parts in $\mathcal{B}_1$ cancel exactly, leaving only the boundary term:
\begin{equation}
\mathcal{A}+\mathcal{B}_1=-\int_{\partial\Omega}\rho|\nabla f|^2\mathbf{v}\cdot\mathbf{n}\,dS
\end{equation}

\textbf{Processing $\mathcal{B}_2$}: Using $\nabla f\cdot(\nabla\mathbf{v})^T\cdot\nabla f=\nabla\mathbf{v}:(\nabla f\otimes\nabla f)$:
\begin{equation}
\mathcal{B}_2=-2\int_\Omega\rho\nabla\mathbf{v}:(\nabla f\otimes\nabla f)dx
\end{equation}

Substituting $\nabla\mathbf{v}=\frac{1}{n}(\nabla\cdot\mathbf{v})\mathbf{I}+\mathbf{S}+\mathbf{A}$ and noting that the antisymmetric part satisfies $\mathbf{A}:(\nabla f\otimes\nabla f)=0$:
\begin{equation}
\mathcal{B}_2=-\frac{2}{n}\int_\Omega\rho(\nabla\cdot\mathbf{v})|\nabla f|^2dx-2\int_\Omega\rho\mathbf{S}:(\nabla f\otimes\nabla f)dx
\end{equation}

\textbf{Final combination}:
\begin{equation}
\frac{dC}{dt}=\mathcal{A}+\mathcal{B}_1+\mathcal{B}_2+\mathcal{B}_3
\end{equation}
\begin{equation}
=-\frac{2}{n}\int_\Omega\rho(\nabla\cdot\mathbf{v})|\nabla f|^2dx-2\int_\Omega\rho\mathbf{S}:(\nabla f\otimes\nabla f)dx-2\int_\Omega\rho\nabla f\cdot\nabla(\nabla\cdot\mathbf{v})dx-\int_{\partial\Omega}\rho|\nabla f|^2\mathbf{v}\cdot\mathbf{n}\,dS
\end{equation}

i.e., $\frac{dC}{dt}=\mathcal{T}_1+\mathcal{T}_2+\mathcal{T}_3+\mathcal{T}_4$. $\square$

\subsection{Diffusion Term from Literature}

The diffusion contribution to the Fisher information production rate has been derived in the literature~\cite{yamano2002,plastino1997}. For completeness, we state the result:

\textbf{Result}: For the Fokker--Planck equation $\partial_t\rho = -\nabla\cdot(\rho\mathbf{v}_{\text{drift}}) + \nabla\cdot(D\nabla\rho)$, the diffusion term is:
\begin{equation}
\mathcal{T}_{\text{diff}}=-2\int_\Omega\rho\,\operatorname{tr}\left[(\nabla\nabla\ln\rho)\,D\,(\nabla\nabla\ln\rho)^T\right]dx-2\int_\Omega\rho\,\nabla\ln\rho\cdot\nabla(\nabla\cdot D)dx
\end{equation}

For constant $D$, this simplifies to:
\begin{equation}
\mathcal{T}_{\text{diff}}=-2D\int_\Omega\rho\,\operatorname{tr}\left[(\nabla\nabla\ln\rho)^2\right]dx
\end{equation}

The essential step is an integration by parts in which the symmetry of the Hessian $\nabla\nabla\ln\rho$ eliminates the gradient-of-divergence contributions, leaving the trace of its square; boundary terms vanish under the stated regularity and decay conditions. The complete argument is given in Yamano~\cite{yamano2002} and Plastino et al.~\cite{plastino1997}.

\section{Integral Identities for Hyperbolic Secant Functions and Burgers Shock Layer Calculation}
\label{app:integrals}

\subsection{Basic Substitution}

Let $t=\tanh(ax)$, $t\in(-1,1)$, then:
\begin{equation}
dt=a\operatorname{sech}^2(ax)dx,\qquad \operatorname{sech}^2(ax)=1-t^2,\qquad dx=\frac{dt}{a(1-t^2)}
\end{equation}

\subsection{Computing \texorpdfstring{$I_2$}{I2}}
\begin{equation}
I_2=\int_{-1}^{1}(1-t^2)\cdot\frac{dt}{a(1-t^2)}=\frac{1}{a}\int_{-1}^{1}dt=\frac{2}{a}
\end{equation}

\subsection{Computing \texorpdfstring{$I_4$}{I4}}
\begin{equation}
I_4=\int_{-1}^{1}(1-t^2)^2\cdot\frac{dt}{a(1-t^2)}=\frac{1}{a}\int_{-1}^{1}(1-t^2)dt=\frac{1}{a}\cdot\frac{4}{3}=\frac{4}{3a}
\end{equation}

\subsection{Computing \texorpdfstring{$J$}{J}}
\begin{equation}
J=\int_{-1}^{1}(1-t^2)^2\cdot t^2\cdot\frac{dt}{a(1-t^2)}=\frac{1}{a}\int_{-1}^{1}t^2(1-t^2)dt=\frac{1}{a}\cdot\frac{4}{15}=\frac{4}{15a}
\end{equation}

\subsection{Computing \texorpdfstring{$I_6$}{I6}}
\begin{equation}
I_6=\int_{-1}^{1}(1-t^2)^3\cdot\frac{dt}{a(1-t^2)}=\frac{1}{a}\int_{-1}^{1}(1-t^2)^2dt=\frac{1}{a}\cdot\frac{16}{15}=\frac{16}{15a}
\end{equation}

\subsection{Detailed Calculation for Burgers Shock Layer}

\textbf{Computing $\mathcal{T}_1$}:
\begin{equation}
\mathcal{T}_1=-2\int_{-\infty}^{\infty}\rho\,d\,g^2\,dx
\end{equation}
where:
\begin{equation}
\rho=\frac{U_0}{4\nu}\operatorname{sech}^2\left(\frac{U_0x}{2\nu}\right),\qquad d=\partial_xu=-\frac{U_0^2}{2\nu}\operatorname{sech}^2\left(\frac{U_0x}{2\nu}\right),\qquad g=\nabla\ln\rho=-\frac{U_0}{\nu}\tanh\left(\frac{U_0x}{2\nu}\right)
\end{equation}

Therefore:
\begin{equation}
\rho d=-\frac{U_0^3}{8\nu^2}\operatorname{sech}^4\left(\frac{U_0x}{2\nu}\right),\qquad g^2=\frac{U_0^2}{\nu^2}\tanh^2\left(\frac{U_0x}{2\nu}\right)
\end{equation}

\begin{equation}
\mathcal{T}_1=\frac{U_0^5}{4\nu^4}\int_{-\infty}^{\infty}\operatorname{sech}^4\left(\frac{U_0x}{2\nu}\right)\tanh^2\left(\frac{U_0x}{2\nu}\right)dx
\end{equation}

Let $a=U_0/(2\nu)$, using $J(a)=4/(15a)$:
\begin{equation}
\mathcal{T}_1=\frac{U_0^5}{4\nu^4}\cdot\frac{4}{15}\cdot\frac{2\nu}{U_0}=\frac{2U_0^4}{15\nu^3}
\end{equation}

\textbf{Computing $\mathcal{T}_3$}:
\begin{equation}
\mathcal{T}_3=-2\int_{-\infty}^{\infty}\rho\,g\,w\,dx
\end{equation}
where:
\begin{equation}
w=\partial_x^2u=\frac{U_0^3}{2\nu^2}\operatorname{sech}^2\left(\frac{U_0x}{2\nu}\right)\tanh\left(\frac{U_0x}{2\nu}\right)
\end{equation}

\begin{equation}
\rho gw=-\frac{U_0^5}{8\nu^4}\operatorname{sech}^4\left(\frac{U_0x}{2\nu}\right)\tanh^2\left(\frac{U_0x}{2\nu}\right)
\end{equation}

\begin{equation}
\mathcal{T}_3=\frac{U_0^5}{4\nu^4}\int_{-\infty}^{\infty}\operatorname{sech}^4\left(\frac{U_0x}{2\nu}\right)\tanh^2\left(\frac{U_0x}{2\nu}\right)dx=\frac{2U_0^4}{15\nu^3}
\end{equation}

\textbf{Computing $\mathcal{T}_{\text{diff}}$}:
\begin{equation}
\mathcal{T}_{\text{diff}}=-2\nu\int_{-\infty}^{\infty}\rho\,(\partial_x^2\ln\rho)^2\,dx
\end{equation}

\begin{equation}
\partial_x^2\ln\rho=-\frac{U_0^2}{2\nu^2}\operatorname{sech}^2\left(\frac{U_0x}{2\nu}\right)
\end{equation}

\begin{equation}
\rho(\partial_x^2\ln\rho)^2=\frac{U_0^5}{16\nu^5}\operatorname{sech}^6\left(\frac{U_0x}{2\nu}\right)
\end{equation}

\begin{equation}
\mathcal{T}_{\text{diff}}=-2\nu\cdot\frac{U_0^5}{16\nu^5}\cdot\frac{16}{15}\cdot\frac{2\nu}{U_0}=-\frac{4U_0^4}{15\nu^3}
\end{equation}

\textbf{Verifying balance}:
\begin{equation}
\mathcal{T}_1+\mathcal{T}_3+\mathcal{T}_{\text{diff}}=\frac{2U_0^4}{15\nu^3}+\frac{2U_0^4}{15\nu^3}-\frac{4U_0^4}{15\nu^3}=0
\end{equation}

$\square$

\section{Two-Dimensional Turbulence Simulation Details}
\label{app:turbulence}

\subsection{Numerical Method}

The two-dimensional incompressible Navier--Stokes equations are solved using a standard Fourier pseudospectral method on a doubly periodic domain $[0, 2\pi]^2$ with resolution $N = 512$. The velocity field is advanced in time using a fourth-order Runge--Kutta scheme with integrating factor for the viscous term. The time step is $\Delta t = 10^{-3}$, and the kinematic viscosity is $\nu = 10^{-3}$, corresponding to a Taylor microscale Reynolds number $\text{Re}_\lambda \approx 100$.

\subsection{Forcing and Initial Conditions}

The flow is maintained in a statistically stationary state by stochastic forcing applied to the lowest wavenumbers $|\mathbf{k}| \leq 2$. The forcing $\hat{\mathbf{f}}(\mathbf{k}, t)$ follows an Ornstein--Uhlenbeck process:
\begin{equation}
\hat{\mathbf{f}}(\mathbf{k}, t+\Delta t) = (1-\alpha\Delta t)\hat{\mathbf{f}}(\mathbf{k}, t) + \sigma\sqrt{\Delta t}\,\boldsymbol{\xi}(t),
\end{equation}
where $\alpha = 0.1$, $\sigma = 0.05$, and $\boldsymbol{\xi}$ is a complex Gaussian random vector satisfying $\mathbf{k}\cdot\boldsymbol{\xi} = 0$ (to ensure incompressibility).

The passive scalar field is initialized as:
\begin{equation}
c(\mathbf{x}, 0) = 1 + 0.1\sin(4x)\sin(4y),
\end{equation}
which ensures positivity by the maximum principle. The scalar diffusivity is $D = \nu/0.7 \approx 1.43\times 10^{-3}$, corresponding to a Schmidt number $Sc = \nu/D = 0.7$.

\subsection{Diagnostic Methods}

The Fisher information is computed as:
\begin{equation}
C(t) = \int_\Omega \rho|\nabla\ln\rho|^2 d\mathbf{x},
\end{equation}
where $\rho = c/\int c\,d\mathbf{x}$. The time derivative $dC/dt$ is evaluated using central differencing over two adjacent time snapshots.

The shear production term $\mathcal{T}_2$ and diffusive dissipation term $\mathcal{T}_{\text{diff}}$ are computed using exact spectral differentiation. Specifically:
\begin{itemize}
\item The velocity gradient tensor $\nabla\mathbf{u}$ is computed in Fourier space.
\item The shear tensor $\mathbf{S}$ is extracted and transformed to physical space.
\item The scalar gradient $\nabla\ln\rho$ is computed using spectral methods.
\item The integrals are evaluated using high-order quadrature in physical space.
\end{itemize}

The residual is defined as:
\begin{equation}
R(t) = \frac{|dC/dt - \mathcal{T}_2 - \mathcal{T}_{\text{diff}}|}{|dC/dt| + \epsilon},
\end{equation}
where $\epsilon = 10^{-12}$ is a small constant to avoid division by zero. Time-averaged statistics are collected over 50 snapshots in the statistically steady state ($t > 20$).

\section{High-Dimensional Scaling Analysis}
\label{app:scaling}

For an $n$-dimensional isotropic Gaussian $\rho \propto \exp(-|\mathbf{x}|^2/2\sigma^2)$, the Fisher information is $C = n/\sigma^2 = O(n)$. The diffusive dissipation term scales as:
\begin{equation}
\mathcal{T}_{\text{diff}} = -2D\int\rho\,\operatorname{tr}\left[(\nabla\nabla\ln\rho)^2\right]dx = -\frac{2Dn}{\sigma^4} = O(n).
\end{equation}

In contrast, the isotropic contraction term for a constant divergence $\nabla\cdot\mathbf{v} = d_0 = O(1)$ scales as:
\begin{equation}
\mathcal{T}_1 = -\frac{2d_0}{n}\int\rho|\nabla\ln\rho|^2dx = -\frac{2d_0}{\sigma^2} = O(1).
\end{equation}

This scaling competition ($O(n)$ vs $O(1)$) implies that high-dimensional ordered structures require generation mechanisms specifically correlated with the information gradient, such as pointwise-aligned shear (Sec.~II.D) or divergence gradients (Sec.~V.B). These conclusions assume isotropic Gaussian statistics and constant divergence; extension to anisotropic or strongly non-Gaussian distributions requires case-by-case analysis.


\begin{acknowledgments}
The author thanks colleagues at the China Aerodynamics Research and Development Center for helpful discussions. Numerical simulations were performed on the Center's computing facilities.
\end{acknowledgments}

\noindent\textbf{Data Availability.} The two-dimensional turbulence simulation data supporting the results of this article are available from the corresponding author upon reasonable request. The companion datasets of this work are openly available: the three-dimensional turbulent velocity field analysis~\cite{wu2026_zenodo_3ddata} and the high-dimensional exactly solvable benchmark~\cite{wu2026_zenodo_highdimdata}, both under CC BY 4.0 licenses.


\end{document}